\documentclass{article}

\usepackage[
    persons={Alireza,Nima},
    bibliosources={refs.bib}
]{pomegranate}
\usepackage{algorithm}
\usepackage{algpseudocode}

\DeclareDelimiter{\Otilde}[\mathnormal{\widetilde{O}}]{\lparen}{\rparen}
\DeclareDelimiter{\Reff}[{\mathcal{R}_{\textnormal{eff}}}]{\lparen}{\rparen}

\DeclareDelimiter{\dTV}[\mathnormal{d}_{\operatorname{TV}}]{\lparen}{\rparen}

\title{Fast Spanning Tree Sampling in Broadcast Congested Clique}
\author[1]{Nima Anari}
\author[1]{Alireza Haqi}
\affil[1]{Stanford University, \url{{anari,ahaqi}@stanford.edu}}
\date{}

\begin{document}

    \maketitle
    \begin{abstract}
        We present the first polylogarithmic-round algorithm for sampling a random spanning tree in the (Broadcast) Congested Clique model.
For any constant $c > 0$, our algorithm outputs a sample from a distribution whose total variation distance from the uniform spanning tree distribution is at most $O(n^{-c})$ in at most $c \cdot \log^{O(1)}(n)$ rounds. The exponent hidden in $\log^{O(1)}(n)$ is an absolute constant independent of $c$ and $n$. This is an exponential improvement over the previous best algorithm of Pemmaraju, Roy, and Sobel (PODC 2025) for the Congested Clique model.

    \end{abstract}
    
    \section{Introduction}\label{sec:intro}

Random spanning trees are a central object of study in combinatorics, probability, and algorithms.
The problem of sampling a uniform spanning tree goes back to the classical work of Kirchhoff and Cayley \cite{kirby2016kirchhoff}.
More generally, given edge weights $\set{w_e\given e\text{ is an edge of the graph}}$, the weighted spanning tree distribution is defined by
\[ \P{T} \propto \prod_{e\in T} w_e. \]

In algorithmic applications, uniform and weighted spanning tree distributions have been used to analyze graph sparsifiers \cite{kyng2018matrix, goyal2009expanders}, design approximation algorithms for the Traveling Salesperson Problem (TSP) \cite{genova2017experimental, karlin2021slightly}, and study the minimum $k$-edge-connected multisubgraph problem \cite{karlin2022improved}.

These applications motivated a long sequence of works on fast sampling algorithms.
The algorithmic study of this problem has a long history: \textcite{broder1989generating,aldous1990random} gave an elegant random-walk-based algorithm.
A long line of work based on the graph Laplacian solving paradigms \cite{wilson1996generating, colbourn1996two, kelner2009faster, madry2014fast, durfee2017sampling, durfee2017determinant} culminated in almost linear time sampling in centralized models of computation \cite{schild2018almost}; more recent techniques based on matroid exchange walks yielded improved nearly linear time algorithms \cite{anari2021log}.

Recently, \textcite{PRS25} initiated the study of sampling spanning trees in distributed models.
They gave an algorithm with runtime $\Otilde{n^{1/2 + \alpha}}$, where $\alpha$ denotes the runtime exponent for matrix multiplication in the Congested Clique model; their approach also has an inherent $\Omega(\sqrt{n})$ bottleneck.
In this paper, we settle the complexity of the problem up to polylogarithmic factors in two distributed models, Congested Clique and Broadcast Congested Clique, by giving an $\Otilde{1}$ round algorithm.

A standard framework for distributed graph algorithms is the Congested Clique model.
In this model, computation proceeds in synchronous rounds over a complete communication network: each node initially knows only its incident edges in the input graph, and in each round every pair of nodes can exchange messages of size $O(\log n)$ bits.
A wide variety of classic problems have been studied in this model, including $(\Delta + 1)$ coloring \cite{czumaj2020simple}, minimum spanning tree \cite{ghaffari2016mst,lotker2003mst, nowicki2021deterministic}, maximal independent set (MIS) \cite{ghaffari2017distributed}, shortest paths \cite{nanongkai2014distributed,becker2021near, dory2022exponentially, dory2021constant}, and spanners \cite{parter2018congested, dory2021constant}.

A more restrictive variant is the Broadcast Congested Clique (BCC), formalized and first introduced in \cite{drucker2014power,ghaffari2015near}, where each node in each round can only broadcast one identical message to all other nodes.
This restriction models scenarios in which individualized communication is infeasible or expensive, while uniform broadcast is cheap.
Although the BCC model is weaker and algorithm design is often more challenging, many problems still admit efficient algorithms \cite{drucker2014power, holzer2014approximation,becker2016effect, jurdzinski2018connectivity}.
Alongside these positive results, a long line of work establishes lower bounds in the BCC model \cite{czumaj2020detecting, drucker2014power, becker2016effect, korhonen2017deterministic}.

\subsection{Our Contribution}
In this work, we study the problem of sampling spanning trees in the Congested Clique and Broadcast Congested Clique models.
To our knowledge, we give the first polylogarithmic round BCC algorithm for sampling from weighted spanning tree distributions.
This is an exponential improvement over the previous best Congested Clique result of \cite{PRS25}, which runs in $\Otilde{n^{1/2 + \alpha}}$, where $\alpha$ is the runtime exponent of matrix multiplication in the Congested Clique model (currently $\alpha = 1 - 2/w$, where $w$ is the sequential matrix multiplication exponent).

\begin{theorem}[\textbf{Main Result}] \label{thm:main-result}
There exists a universal constant $C$ and an algorithm $\mathcal{A}$ such that for every graph $G=(V,E,w)$ with $w \in \mathbb{Z}_{\ge 0}^{|E|}$ and $\|w\|_\infty \le U$, $\mathcal{A}$ outputs a sample from a distribution $\mu'$ over spanning trees satisfying
\[
    \dTV{\mu, \mu'} \le \varepsilon.
\]

Furthermore, $\mathcal{A}$ runs in
\[
\log\!\left(\frac{1}{\varepsilon}\right)\log(nU)\log^C(n)
\]
Broadcast Congested Clique rounds, which implies the same guarantee in the Congested Clique model.
\end{theorem}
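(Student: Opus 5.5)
Our plan is to implement a parallel version of the Schild / Anari et al.\ style sampler in the BCC model, using only primitives that reduce to broadcasting $\Otilde{1}$ words per node. The backbone is the \emph{down-up / contraction–deletion recursion driven by effective resistances}. The marginal $\P{e\in T} = w_e\Reff{e}$ is exactly the leverage score of $e$. Because of this, a batch of edges whose marginals are known can be decided (kept, i.e.\ contracted, or discarded, i.e.\ deleted) by sequential conditioning. The first key step is a BCC subroutine that computes $(1\pm\delta)$-approximate effective resistances for all edges in $\log(nU)\log^{O(1)} n$ rounds. To build it, we would construct a spectral sparsifier of the Laplacian in polylog rounds; in BCC each node broadcasts its sampled incident edges, and spanner-based sparsification (e.g.\ Koutis–Xu via repeated spanners, which have polylog-round BCC constructions) gives every node the same $\Otilde{n}$-edge sparsifier. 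Every node then knows an $\Otilde{n}$-size sparsifier $H$ and can locally compute $\Reff_H$, which is a $(1\pm\delta)$ approximation of $\Reff_G$. The $\log(nU)$ factor enters through the weight scales and the bit precision.

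The second step is to turn approximate marginals into an exact-in-distribution (up to $\varepsilon$) sampler with few adaptive rounds. We would use a \emph{parallel rejection / entropic-independence} scheme. At each phase, each node independently proposes to include each of its edges with probability proportional to the approximate marginal, scaled down by a factor $\Theta(1/\log n)$. We then accept the batch via a correction step computed locally on the broadcast sparsifier. Because the spanning-tree distribution is strongly Rayleigh, conditioning on a small random batch changes marginals by at most a constant factor with high probability; this is the fractional log-concavity / spectral-independence property. That lets us contract or delete a constant fraction of the remaining "uncertain mass" per phase. After $O(\log n)$ phases (edges are reduced from $m$ to $n-1$ decided tree edges, with each contraction halving vertex count in expectation for some subset), the tree is fixed. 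An alternative we would keep in reserve is to do Schild's shortcutting of Aldous–Broder: simulate the random walk on Schur complements onto random vertex subsets of halving size. Schur complements can be approximated in polylog BCC rounds by the same sparsify-and-broadcast trick, giving $O(\log n)$ levels.

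The third step handles the error budget. Each phase incurs TV error from approximate resistances. We would drive this error to $n^{-c}$ per phase by choosing $\delta=1/\mathrm{poly}(n)$ and combining it with a rejection-sampling correction that is exact given exact-enough ratios. The total TV error is therefore $\le \varepsilon$ after repeating the failure-prone steps $\log(1/\varepsilon)$ times (amplification by independent retries with a verifiable acceptance event). This yields the product $\log(1/\varepsilon)\log(nU)\log^C n$.

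The main obstacle is the second step: making the batch-conditioning sampler have TV error controllable with only approximate marginals, while ensuring every node stays consistent. In BCC, all randomness affecting global decisions must be broadcast. If that proves hard, we would fall back on the Schur-complement/Aldous–Broder route, where correctness follows from exact walk semantics and only the Schur complement approximations need spectral accuracy.
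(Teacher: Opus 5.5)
Your first step essentially matches the paper: a constant-accuracy BCC spectral sparsifier, built as in \cite{forster2022laplacian} and broadcast, gives every node overestimates of $w_e\mathcal{R}_{\mathrm{eff}}(e)$ for its incident edges, and $\log(nU)$ enters there. Steps two and three, however, have a genuine gap, and it is the one you flag yourself.

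Any ``accept the batch via a correction step'' or rejection scheme needs the true conditional probabilities of $\mu$. These are ratios of weighted spanning-tree counts of $G$, or equivalently high-accuracy entries of $L_G^{+}$. A sparsifier $H$ does not determine them, so a correction computed on $H$ samples from the wrong distribution. Your claim that conditioning on a random batch changes marginals by only a constant factor is stated without argument, and even granted it would only bound acceptance rates; it would not remove this bias.

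Pushing the bias to $n^{-c}$ by taking $\delta=1/\mathrm{poly}(n)$ is not affordable. The BCC sparsifier costs $\delta^{-2}\,\mathrm{polylog}(n)$ rounds, and a $(1\pm\delta)$ sparsifier of a dense graph has $\Omega(n/\delta^{2})$ edges, so broadcasting it alone takes polynomially many rounds. The $\log(1/\varepsilon)$ factor also cannot come from ``independent retries with a verifiable acceptance event'': repetition boosts the probability of a detectable success, but does nothing to a distributional bias no node can detect.

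The Schur-complement fallback has the same defect. A walk on an approximate Schur complement is not the exact walk, so small TV error again needs very high accuracy. It is also essentially the route of \cite{PRS25}, with its $\Omega(\sqrt{n})$ bottleneck.

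Finally, you never say how proposals reach a common place. With rates proportional to $q_e$, a star center (incident $q$-mass $n-1$) would have to broadcast $\Theta(n)$ proposals.

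The missing idea is to let approximate marginals affect only the running time, never correctness. The paper simulates the complement down-up walk of \cite{anari2024optimal}, whose stationary distribution is exactly the (isotropized) $\mu$. Your proposal step is close to its down step: vertices sample edges at rates roughly proportional to $q_e$, forming a random set $Z$ of $t=2n$ elements. Instead of accepting or correcting, the paper gathers $Z\cup S_i$ (only $O(n)$ edges) at one machine. That machine samples $S_{i+1}\propto\mu$ \emph{exactly}, using unbounded local computation.

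The overestimates $q_e$ need only constant accuracy, and they serve two purposes. First, they drive the isotropic transformation, which controls mixing. That gives $\log(1/\varepsilon)\,\mathrm{polylog}(n)$ steps via KL contraction and Pinsker, which is where $\log(1/\varepsilon)$ comes from. Second, they control communication. Because the $q$-weighted graph has density $O(1)$, $O(\log n)$ rounds suffice to assign each edge to one endpoint with $O(\log n)$ total $q$-mass per vertex. With $\mathrm{poly}(n)$ isotropic copies, each vertex then contributes only $\widetilde{O}(1)$ sampled edges per step, broadcast as an endpoint pair plus a multiplicity.
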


\subsection{Related Work}

The literature on distributed computing, including Congested Clique, Broadcast Congested Clique, LOCAL, CONGEST, and Massively Parallel Computation (MPC), has mostly focused on decision and search problems.
Sampling problems are comparatively less explored.
Classic examples studied through the distributed lens include vertex coloring \cite{feng2018distributed, barenboim2011deterministic, ghaffari2022deterministic, barenboim2017deterministic}, CSP \cite{bordeaux2009experiments, chung2014distributed}, maximal independent set \cite{ghaffari2018improved, faour2025faster, efron2020beyond}, and minimum spanning tree \cite{ghaffari2018improved, nowicki2021deterministic, hegeman2015toward}.
A smaller subset of these works studies sampling variants \cite{carlson2023improved, fischer2018simple, feng2017can, feng2018distributed}.
Our work contributes to this sampling perspective by giving an exponential improvement for random spanning tree sampling and the first polylogarithmic time algorithm in the Broadcast Congested Clique model.

The efficiency of CONGEST algorithms is measured by the number of synchronous rounds until termination.
Ideally, one seeks $O(1)$ round algorithms, but some tasks (for example, diameter computation) require at least as many rounds as the graph diameter \cite[see][]{Frischknecht2012, Peleg2000, Awerbuch1985, KuttenPeleg1998, Kuhn2006}.
Beyond dependence on $n$, the literature also studies bounds in terms of structural graph parameters such as diameter, arboricity \cite{morgan2021algorithms}, and maximum degree; sublinear dependence on these parameters is often desirable. The maximum degree parameter also plays an important role in the analysis of many distributed graph problems, including maximal independent set \cite{kawarabayashi2019improved}. Some problems are also analyzed through problem specific quantities, such as random walk mixing time and conductance \cite{molla2017distributed}.

In the Congested Clique model, the communication network is a clique (not necessarily the input graph itself). This gives communication diameter $1$, but it does not force structural parameters of the input graph, such as maximum degree, to be constant.
As in CONGEST, the best possible complexity is still $O(1)$ rounds.
Several fundamental problems admit constant round algorithms in Congested Clique, including minimum spanning tree \cite{nowicki2021deterministic}, graph coloring \cite{czumaj2020simple}, and graph spanners \cite{dory2021constant}.

The problem of distributed spanning tree sampling was first studied by \textcite{PRS25}, who gave a sublinear-time algorithm in the Congested Clique model running in $\Otilde{n^{1/2+\alpha}}$ rounds. Here, $\alpha$ denotes the matrix-multiplication exponent in the Congested Clique model (currently $\alpha = 1 - \frac{2}{\omega}$, where $\omega$ is the sequential matrix-multiplication exponent). Their algorithm builds on the classic Aldous--Broder random-walk method, where a transcript of a random walk on the graph is translated to a random spanning tree by keeping the edges where the walk first visits each vertex. \textcite{PRS25} develop a bottom-up framework for constructing (a reduced transcript of) a long random walk via shortcutting and Schur complements. This gives polylogarithmic runtime when the graph has near-linear cover time; in general, however, the runtime is $\Otilde{n^{1/2+\alpha}}$, with an inherent $\Omega(\sqrt{n})$ bottleneck, regardless of improvements in the matrix-multiplication exponent.

We take a different approach. We base our distributed algorithm on the approximate spanning tree sampling framework of \textcite{anari2024optimal}. In this framework, sampling a spanning tree from a dense graph is reduced to a series of similar sampling problems involving sparsified graphs. We show how to compute the sparsified graphs in $\Otilde{1}$ rounds and to collect the edges in one machine, where we can trivially perform the subtask of sampling a spanning tree. A key ingredient in this algorithm is estimating the marginal probability that each edge belongs to a random spanning tree. To compute these quantities, we use the distributed algorithm of \textcite{forster2022laplacian}, which finds Laplacian sparsifiers based on probabilistic spanner techniques \cite[see][]{becker2021near, baswana2007simple}.

In this paper, our main result is a polylogarithmic-round algorithm in the Broadcast Congested Clique model for sampling random spanning trees by instantiating the approximate sampling algorithm of \textcite{anari2024optimal}.

\subsection{Acknowledgements}
This work was supported by NSF CCF-2045354.

    \section{Preliminaries}\label{sec:prelims}

We introduce notation and standard results used throughout the paper.
We begin with the Congested Clique and Broadcast Congested Clique models, and then review the ingredients needed for our sampling algorithm: the complement down-up walk and polylogarithmic-round Laplacian tools from \cite{forster2022laplacian}.

Throughout the paper, we use $\Otilde{1}$ and $\poly\log(n)$ interchangeably.

\subsection{Congested Clique and Broadcast Congested Clique}
\begin{definition}[CONGEST model] \label{def:Congested-model}
Given a graph $G = (V, E)$ with $n = \card{V}$ vertices, the CONGEST model is defined as follows:
\begin{itemize}
    \item There are $n$ machines $M_1, M_2, \dots, M_n$ with a bijection between machines and graph vertices.
    \item For every edge $(u,v) \in E$, the corresponding machines are connected by a dedicated communication channel of bandwidth $O(\log n)$ bits per round. No communication links exist between non-adjacent vertices.
    \item Each node initially knows its adjacent nodes and incident communication edges.
\end{itemize}
\end{definition}

For convenience, we introduce notation for incident edges.
\begin{definition}
Given a graph $G = (V, E)$ and a vertex $v \in V$, let $N_v^G$ denote the set of edges incident to $v$.
\end{definition}

The Congested Clique is a special case of CONGEST (\cref{def:Congested-model}) in which the communication network is a clique.
\begin{definition}[Congested Clique model] \label{def:Congested-Clique-model}
The Congested Clique model is the variant of CONGEST in which, for an input graph $G=(V,E)$, each machine knows the neighbors of its corresponding vertex in $G$ and can communicate with every other machine via a dedicated $O(\log n)$-bit channel per round.
\end{definition}

Since assuming $\Theta(n^2)$ communication links is strong, a natural restriction is the Broadcast Congested Clique, where each machine must send the same message to all others in a round.

\begin{definition}[Broadcast Congested Clique model]\label{def:Broadcast-Congested-Clique}
Broadcast Congested Clique is the restricted version of Congested Clique (\cref{def:Congested-Clique-model}) in which each node can only broadcast one identical message to all nodes in each round.
\end{definition}

In all three models above, every node $v\in V$ knows its incident edge set $N_v^G$ in the input graph $G$.

Algorithms in CONGEST, Congested Clique, and Broadcast Congested Clique execute in synchronous rounds.
Depending on the model, a machine may communicate only with graph neighbors or with all machines; in the broadcast model, that message must be identical for all recipients.
An algorithm terminates when a valid solution is produced and all machines know that the computation has terminated.

\begin{lemma}\label{lemma:initial-spanning-tree}
Given a connected input graph $G$, one can compute a spanning tree in $\Otilde{1}$ rounds in the Broadcast Congested Clique model \cite{jurdzinski2017brief}.
\end{lemma}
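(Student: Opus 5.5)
We prove the lemma with a Bor\r{u}vka-style merging procedure adapted to broadcast communication, compressing the information each vertex must broadcast. The approach is the connectivity algorithm of \cite{jurdzinski2017brief}. Throughout, every machine maintains a global view of the current partition of $V$ into components, together with the set of tree edges chosen so far. Initially each vertex is its own component and no tree edges are chosen. Since every broadcast is heard by all machines, this global view is kept consistent for free, provided every merge decision is broadcast.

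The key steps are as follows.

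\textbf{Phase structure.} In each phase, every component $C$ needs to learn a single edge leaving $C$, if one exists. Adding one outgoing edge per component (with ties broken by edge ID) yields a forest on the components. Merging along it at least halves the number of non-isolated components, so $O(\log n)$ phases suffice.

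\textbf{Finding an outgoing edge.} A vertex $v \in C$ knows $N_v^G$ and, from the global view, the partition. It can therefore locally identify its own edges leaving $C$ and broadcast one of them, say the minimum-ID one, using $O(\log n)$ bits. This takes only one round per phase. Every machine then selects, for each component, the minimum broadcast edge. All machines update the partition and the tree identically, with no further communication.

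\textbf{Correctness and termination.} The chosen edges form a forest, since minimum-ID selection with consistent tie-breaking prevents cycles, exactly as in the standard Bor\r{u}vka argument. Because $G$ is connected, each phase with more than one component merges at least pairs of components. After $O(\log n)$ phases a single component remains, and its accumulated edges form a spanning tree known to everyone.

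The point I expect to need the most care is the step that in \cite{jurdzinski2017brief} is handled by sketching, namely communication when vertices do \emph{not} know the partition. In our setting the global view makes it trivial: each vertex broadcasts one candidate edge, so the total round count is $O(\log n)\subseteq\Otilde{1}$. I would double-check only that the broadcast message size stays $O(\log n)$ bits (an edge ID plus a flag) and that tie-breaking is globally consistent, which guarantees the forest property.
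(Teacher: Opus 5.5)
Your proof is correct, and it takes a different route from the paper, which gives no argument at all and simply cites \cite{jurdzinski2017brief}.

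What you describe is the folklore parallel Bor\r{u}vka implementation in BCC:
\begin{itemize}
  \item Every broadcast is heard by everyone, so all machines hold the same component partition.
  \item Each vertex locally finds its minimum outgoing edge under a fixed strict total order and broadcasts it.
  \item One round of these broadcasts lets every machine compute each component's minimum outgoing edge.
  \item The standard Bor\r{u}vka argument then gives a forest and at least halves the component count.
\end{itemize}
This yields $O(\log n)$ deterministic rounds with $O(\log n)$-bit messages, which already suffices for the $\Otilde{1}$ claim. The sharper round count of the cited work buys nothing here, and your self-contained argument is arguably preferable.

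Two framing points are off. First, this is not ``the connectivity algorithm of \cite{jurdzinski2017brief}''. That work uses sketch-type ideas to improve on exactly this simple Bor\r{u}vka baseline. Second, your closing worry is moot. In BCC the partition is always common knowledge, so sketching is never needed to make Bor\r{u}vka work; it is only used to merge faster than one halving per round.

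Your route also has a side benefit. If you order edges by $(w_e,\mathrm{ID})$ instead of by ID alone, the same procedure returns a maximum-weight spanning tree. Matroid greedy optimality depends only on the order of the weights, so that tree maximizes $\prod_{e\in T} w_e$. That is exactly the maximum-probability starting state assumed by \cref{alg:down-up-complement}, obtained at the cost of $O(\log(nU))$-bit messages. This would settle the initialization caveat without the paper's burn-in argument or appeal to a separate MST routine.
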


\subsubsection{Data Transmissions}

Our algorithm repeatedly gathers sampled edges to a single machine.
The sampled subgraphs are sparse in a density sense, which allows efficient collection in BCC.
We first recall subset density.

\begin{definition}
For a graph $G=(V,E)$ and nonempty $S\subseteq V$, define the subset density of $S$ as
\[
\frac{|E(S)|}{|S|},
\]
where $E(S)$ is the set of edges induced by $S$.
Define the graph density
\[
\rho(G) = \max_{\emptyset \neq S\subseteq V} \frac{|E(S)|}{|S|}.
\]
For a weighted graph $G=(V,E,w)$, define weighted densities analogously by replacing $|E(S)|$ with $w(E(S)) = \sum_{e\in E(S)} w_e$.
\end{definition}

Trees and forests have density at most one.
By a standard observation, the weighted graph whose edge weights are effective resistances also has subset density at most one.
We exploit this property to associate each edge to one of its endpoints in the BCC model so that each vertex receives total associated weight $\Otilde{1}$; we call this a light association.

\begin{definition}\label{def:light-association}
Given a weighted graph $G=(V,E)$ with edge weights $q_e\ge 0$, a collection $(N_v^o)_{v\in V}$ is a \emph{light edge association} if
\begin{itemize}
    \item for every $v\in V$, $N_v^o \subseteq N_v$,
    \item $\bigcup_{v\in V} N_v^o = E$ and the sets are pairwise disjoint, and
    \item for every $v\in V$, $\sum_{e\in N_v^o} q_e \le \Otilde{1}$.
\end{itemize}
\end{definition}

Equivalently, a light edge association is an edge orientation under a fixed convention: if $e=\{u,v\}$ is associated to $u$, view $e$ as oriented from $v$ to $u$.
Thus, throughout the paper, ``orientation'' and ``association'' both mean assigning each edge to one of its endpoints.

After computing a light association and sampling edges, we can gather sampled edges at one machine in $\Otilde{1}$ rounds, because each vertex contributes only $\Otilde{1}$ expected/typical sampled edges. We formalize this in \cref{prop:bounded_degree_edge_proof}.
\begin{proposition}[Informal; see \cref{prop:bounded_degree_edge_proof}]
Given overestimates $q_e$ of spanning-tree edge marginals (equivalently, overestimates of $w_e\Reff{e}$), we can compute a light association in $\Otilde{1}$ BCC rounds.
\end{proposition}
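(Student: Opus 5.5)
We compute the light association with a distributed peeling procedure in the style of Nash-Williams / Barenboim--Elkin forest decompositions, driven by the density bound on the weights $q_e$. First we fix the density bound. Write $q_e \le \beta\, w_e\Reff{e}$ for some $\beta = \Otilde{1}$; since $q_e$ are overestimates, we also assume $q_e \ge w_e\Reff{e}$. For any nonempty $S\subseteq V$, Rayleigh monotonicity gives $\Reff[G[S]]{e} \ge \Reff[G]{e}$ for $e\in E(S)$. Foster's theorem on $G[S]$ then gives $\sum_{e\in E(S)} w_e\Reff[G]{e} \le \sum_{e\in E(S)} w_e \Reff[G[S]]{e} \le |S|-1$. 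Hence the $q$-weighted density satisfies $\rho_q(G)\le \beta = \Otilde{1}$. If the overestimates are only guaranteed to satisfy $\sum_e q_e = \Otilde{n}$ rather than a pointwise bound, we would first cap or rescale. We state the proposition under the pointwise assumption, which is what the marginal estimators of \cite{forster2022laplacian} provide.

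The algorithm runs in phases. Every vertex $v$ knows $N_v$ and the values $q_e$ for its incident edges. Let $A$ be the set of active vertices, initially $A=V$. For each $v\in A$, define its active weighted degree $d_A(v)=\sum_{e\in N_v,\ e\subseteq A} q_e$. In each phase, every active vertex with $d_A(v)\le 4\beta$ becomes inactive. It associates to itself all its edges whose other endpoint is still active, that is, it takes $N_v^o = \{e\in N_v : e\subseteq A\}$ at the moment of peeling. Edges between two vertices peeled in the same phase are broken by ID, so each edge is assigned exactly once. Each peeling vertex then broadcasts a single bit announcing that it is leaving, and every vertex updates $A$ and recomputes $d_A$ locally. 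One phase therefore costs $O(1)$ BCC rounds. By construction the sets $N_v^o$ are disjoint, they cover $E$, and each satisfies $N_v^o\subseteq N_v$ and $\sum_{e\in N_v^o}q_e \le 4\beta=\Otilde{1}$.

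The main obstacle is bounding the number of phases. We use the averaging argument. Applied to $S=A$, the density bound gives $\sum_{v\in A} d_A(v) = 2\,q(E(A)) \le 2\beta|A|$. By Markov's inequality, at most half of the vertices in $A$ have $d_A(v) > 4\beta$. So at least half of the active vertices are peeled in every phase, and after $\lceil\log_2 n\rceil$ phases no active vertices remain. The total cost is $O(\log n)$ rounds, which is $\Otilde{1}$.

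The remaining delicate point is that the algorithm needs $\beta$. We either know a valid bound on the overestimation factor, or we guess it by doubling. If a phase fails to halve $|A|$, which is detected by each peeling vertex broadcasting one bit so that counts are known, we double the threshold. This adds only an $O(\log n)$ factor.
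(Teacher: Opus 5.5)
Your proof is correct, but it takes a different route from the paper's.

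\textbf{The paper's route.} Algorithm~\ref{alg:bounded_degree_edge_association} peels \emph{edges}. In each iteration, every vertex broadcasts its unweighted residual degree $d_v$ and claims each incident residual edge with $q_e < 8\alpha/d_v$. This costs the vertex at most $8\alpha$ weight per iteration. The paper then argues, only in sketch form via a dense-core contradiction, that the number of residual edges halves each iteration. That gives $O(\log n)$ iterations and a per-vertex bound of $O(\alpha\log n)$.

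\textbf{Your route.} You peel \emph{vertices}, in the style of an H-partition: a vertex leaves once its $q$-weighted degree into the active set is at most $4\beta$, and it takes all its remaining active edges at that moment. This buys you two things.
\begin{itemize}
\item The progress step is a one-line Markov bound from $\sum_{v\in A} d_A(v) = 2q(E(A)) < 2\beta|A|$. This is more transparent than the paper's halving sketch.
\item Each vertex is charged exactly once, so its associated weight is at most $4\beta = O(\beta)$, with no $\log n$ accumulation across iterations.
\end{itemize}
Each phase also needs only one bit per peeling vertex, versus an $O(\log n)$-bit degree announcement in the paper. Both procedures take $O(\log n)$ BCC rounds.

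\textbf{Other remarks.} You also prove, via Rayleigh monotonicity and Foster's theorem on $G[S]$, the ``standard observation'' on the density of effective-resistance weights that the paper only asserts. Your progress argument uses only the density bound $\max_S q(E(S))/|S| < \beta$, so it works verbatim under the paper's formal hypothesis of a known strict density bound $\alpha$. In that setting the doubling search for $\beta$ is unnecessary. Even where it is used, it adds only $O(\log\beta)$ extra phases starting from $\beta_0 = 1$, not a multiplicative $O(\log n)$ factor as you state.
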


\subsection{Complement Down-Up Walk}
To design our BCC sampler, we simulate the down-up walk framework of \textcite{anari2024optimal}.
That work proves near-optimal mixing for certain down-up walks on strongly Rayleigh distributions, under isotropy assumptions.

We use standard definitions from that literature.

\begin{definition}(Restricted distribution)
\label{def:restricted_distribution}
For a distribution $\mu$ over subsets of $[n]$ and $S\subseteq [n]$, define $\mu_S$ as the distribution of $F\sim\mu$ conditioned on $F\subseteq S$.
\end{definition}

\begin{definition}(Conditional distribution)
\label{def:conditional-distribution}
For a distribution $\mu$ over subsets of $[n]$ and $T\subseteq [n]$, define $\mu^T$ as the distribution of $F\sim\mu$ conditioned on $F\supseteq T$.
\end{definition}

For a distribution $\mu : \binom{[n]}{k} \to \mathbb{R}_{\ge 0}$, the generating polynomial is
\[
g_{\mu}(z_1, \dots, z_n) = \sum_{S \in \binom{[n]}{k}} \mu(S) \prod_{i \in S} z_i.
\]

\begin{definition}\label{def:rayleigh-distribution}
A polynomial $g(z_1,\dots,z_n)\in \mathbb{R}[z_1,\dots,z_n]$ is \emph{real stable} if it has no root in $H^n$, where $H=\{z\in\mathbb{C}\mid \Im(z)>0\}$.
(As usual, the zero polynomial is considered real stable.)
\end{definition}

A distribution $\mu:2^{[n]}\to\mathbb{R}_{\ge 0}$ is strongly Rayleigh iff its generating polynomial is real stable.
If $\mu$ is strongly Rayleigh, then its conditional and restricted distributions (\cref{def:conditional-distribution,def:restricted_distribution}) are also strongly Rayleigh.

\begin{definition}[\textbf{Down-up Walk}]
Given $\mu\in\mathbb{R}_{\ge 0}^{\binom{n}{k}}$ on $k$-subsets of $[n]$, the down-up walk is the product of stochastic operators $D_{k\to \ell}U_{\ell\to k}$, where:
\begin{itemize}
    \item $D_{k\to \ell}\in \mathbb{R}_{\ge 0}^{\binom{n}{k}\times\binom{n}{\ell}}$ is defined by
    \[
    D_{k\to\ell}(S,T)=
    \begin{cases}
    0 & \text{if } T\nsubseteq S,\\[4pt]
    \dfrac{1}{\binom{k}{\ell}} & \text{if } T\subseteq S.
    \end{cases}
    \]
    \item $U_{\ell\to k}\in \mathbb{R}_{\ge 0}^{\binom{n}{\ell}\times\binom{n}{k}}$ is defined by
    \[
    U_{\ell\to k}(T,S)=
    \begin{cases}
    0 & \text{if } T\nsubseteq S,\\[4pt]
    \dfrac{\mu(S)}{\sum_{S'\supset T} \mu(S')} & \text{if } T\subseteq S.
    \end{cases}
    \]
\end{itemize}
\end{definition}

For matroid distributions (including spanning tree distributions under external fields), the walk $D_{k\to k-1}U_{k-1\to k}$ mixes in $\tilde O(k)$ steps.
\textcite{anari2024optimal} showed that for the complement distribution $\bar\mu\in\mathbb{R}_{\ge 0}^{\binom{n}{n-k}}$, where $\bar\mu(\bar S)=\mu(S)$, the walk $D_{n-k\to n-k-1}U_{n-k-1\to n-k}$ can mix in $\tilde O(k)$ steps (instead of the trivial $\tilde O(n-k)$) under isotropy.

The complement down-up walk (\cref{alg:down-up-complement}) can be parameterized by $t\ge k+1$; when $t=\Omega(k)$, \cite[Theorem 3]{anari2024optimal} implies $\Otilde{1}$ mixing steps.

\begin{theorem}(Informal version of \cite[Theorem 3]{anari2024optimal})\label{theorem:backbone-algorithm}
Given an almost isotropic strongly Rayleigh distribution $\mu$ (\cref{def:isotropic-distribution}), \cref{alg:down-up-complement} with parameter $t=\Omega(k)$ mixes in $T=C_0\log^{C_1}(n)$ steps for constants $C_0,C_1$.
More precisely, if $\mu_T$ is the distribution after $\log\!\left(\frac{1}{\varepsilon}\right)\cdot T$ steps, then
\[
\dTV{\mu,\mu_T}\le \varepsilon.
\]
\end{theorem}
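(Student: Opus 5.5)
The statement is labeled as an informal version of \cite[Theorem 3]{anari2024optimal}, so the plan is a reduction to that theorem rather than a new mixing-time argument. The work lies in matching hypotheses and parameters: check that the objects here satisfy the assumptions there, then translate their mixing bound into the form $\log(1/\varepsilon)\cdot T$.

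\textbf{Hypotheses.} First I will confirm that $\mu$ is strongly Rayleigh. Weighted spanning tree distributions have real-stable generating polynomials by the determinantal (Kirchhoff) representation, and strong Rayleighness is preserved under the external fields and the restrictions and conditionings (\cref{def:restricted_distribution,def:conditional-distribution}) applied by \cref{alg:down-up-complement}. Second, I will check that ``almost isotropic'' in \cref{def:isotropic-distribution} coincides with their isotropy assumption, namely that every marginal $\P_{\mu}{e\in S}$ is at most $O(k/n)$ up to constants. In the spanning tree setting, where $k=n_V-1$, this is obtained by splitting each edge into copies proportional to its marginal. Third, I will check that the walk with parameter $t$ is exactly their complement down-up walk $D_{n-k\to n-t}U_{n-t\to n-k}$ on $\bar\mu$, equivalently restricting $\mu$ to a random $t$-subset containing the current set, with $t=\Omega(k)$.

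\textbf{Mixing bound.} Their theorem gives a modified log-Sobolev or entropy contraction constant of $\Omega(1/\poly\log n)$ for this walk. The entropy of the starting state relative to $\mu$ is at most $\log(1/\mu_{\min})$. Since $\mu_{\min}$ may be as small as roughly $(nU)^{-n}$, this is controlled by a warm start: the initial tree is produced by \cref{lemma:initial-spanning-tree} and the walk begins from it, or one first reaches a $\poly(n)$-warm state. Then $T=C_0\log^{C_1} n$ steps reduce the TV distance to a constant, say $1/4$.

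\textbf{Boosting and main obstacle.} To get the $\log(1/\varepsilon)$ factor, I will use submultiplicativity of $d(t)=\sup_x \dTV{P^t(x,\cdot),\mu}$, which gives $\bar d(kT)\le \bar d(T)^k$ and hence $\log(1/\varepsilon)\cdot T$ steps. This argument requires the worst-case mixing bound over all starting points, which is exactly where the dependence on the initial state enters. I expect this to be the main obstacle: showing that the entropy contraction from \cite{anari2024optimal} yields a worst-case constant-TV bound in only $\poly\log n$ steps, independent of $U$. If a worst-start bound with $\poly\log n$ rate is not directly available, my fallback is to absorb the $\log(1/\mu_{\min})=O(n\log(nU))$ term, which costs only a logarithm after entropy contraction. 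Since the contraction rate is $1-1/\poly\log n$, this gives $\log(n\log(nU))\cdot\poly\log n$ steps, and that fits within the paper's overall $\log(nU)\log^C n$ budget.
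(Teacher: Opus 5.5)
\textbf{Gap: the starting state.} Reducing to \cite[Theorem 3]{anari2024optimal} is all the paper does; it gives no argument beyond the citation. Your identification of \cref{alg:down-up-complement} with the walk $D_{n-k\to n-t}U_{n-t\to n-k}$ on $\bar\mu$ is also correct. The gap is how you control the starting state, which is exactly what makes the bound free of the weights. You never use the hypothesis built into \cref{alg:down-up-complement}: $S_0$ maximizes $\mu$ over all $k$-sets.

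Instead you propose starting from the tree of \cref{lemma:initial-spanning-tree}. That tree is arbitrary and carries no warmth guarantee. Your other option, an unspecified ``$\poly(n)$-warm state'', does not exist as a point mass in general: for uniform spanning trees of $K_n$, every state has probability $n^{-(n-2)}$. Your main route, submultiplicativity of $\bar d$, needs worst-case mixing. Entropy contraction only gives that with a bound on $\log(1/\mu_{\min})$, and for a general strongly Rayleigh $\mu$ as in the statement, no such bound exists. Your fallback gives about $\kappa^{-1}\log\bigl(n\log(nU)/\varepsilon^2\bigr)$ steps. That depends on $U$ and on spanning-tree-specific weight bounds, so it proves a weaker, $U$-dependent statement than the $U$-free one claimed. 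Fitting inside the main theorem's budget is not the same as proving this theorem.

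\textbf{The missing step.} $\mu$ is supported on at most $\binom{n}{k}$ sets and $S_0$ is a mode, so $\mu(S_0)\ge\binom{n}{k}^{-1}$. Hence
\[
D_{\mathrm{KL}}(\delta_{S_0}\,\|\,\mu)=\log\frac{1}{\mu(S_0)}\le k\log n,
\]
whatever the weights are. Iterate the entropy contraction $D_{\mathrm{KL}}(\nu_{s+1}\|\mu)\le(1-\kappa)D_{\mathrm{KL}}(\nu_s\|\mu)$, where $\kappa\ge 1/(C\log^{C'}n)$ comes from the cited analysis. Pinsker then gives $\dTV{\nu_s,\mu}\le\varepsilon$ as soon as $s\ge\kappa^{-1}\log\bigl(k\log n/(2\varepsilon^2)\bigr)$. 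For $\varepsilon\le 1/2$ this is at most $\log(1/\varepsilon)\cdot C_0\log^{C_1}n$ after raising $C_1$ by one. This yields the $\log(1/\varepsilon)$ factor directly, so neither submultiplicativity nor worst-case mixing is needed.

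\textbf{Minor points.} Verifying strong Rayleighness via Kirchhoff is unnecessary, since it is a hypothesis. Specializing the general statement to spanning trees is what brought $U$ into your bound in the first place.
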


\textbf{Remark}. The analysis in \cite{anari2024optimal} gives $C_1\le 3$.

\begin{definition}\label{def:isotropic-distribution}
Given $\mu\in\mathbb{R}_{\ge 0}^{\binom{n}{k}}$ on $k$-subsets of $[n]$, we call $\mu$ \emph{isotropic} if for every $i\in[n]$,
\[
\P_{S\sim\mu}{i\in S} \le \frac{2k}{n}.
\]
\end{definition}

\begin{remark}
The constant factor $2$ in the isotropy bound is arbitrary; the same analysis works with any fixed constant.
\end{remark}

The mixing guarantee of \cref{theorem:backbone-algorithm} requires isotropy.
Thus, we need a transformation that maps arbitrary distributions to an isotropic one.
A standard construction from \cite{anari2024optimal} is as follows.

\begin{definition}[\textbf{Isotropic Transformation} \cite{anari2024optimal}]
\label{def:isotropic-transformation}
Let $\mu : \binom{[m]}{k} \to \mathbb{R}_{\ge 0}$ be a distribution on $k$-subsets of $[m]$.
Assume we are given marginal overestimates $p_1,\ldots,p_m$ with $p_i\ge \P_{S\sim\mu}[i\in S]$ and $\sum_{i=1}^m p_i\le K$.
Set
\[
t_i := \left\lceil \frac{m}{K}p_i \right\rceil.
\]
For each $i\in[m]$, create $t_i$ copies $i^{(1)},\ldots,i^{(t_i)}$, and let
\[
U := \bigcup_{i=1}^m \{i^{(1)},\ldots,i^{(t_i)}\}.
\]
Define $\mu^{\mathrm{iso}}: \binom{U}{k}\to\mathbb{R}_{\ge 0}$ by
\[
\mu^{\mathrm{iso}}\!\left(\{i_1^{(j_1)},\ldots,i_k^{(j_k)}\}\right)
:= \frac{\mu(\{i_1,\ldots,i_k\})}{t_{i_1}\cdots t_{i_k}}.
\]
Equivalently, sample $\{i_1,\ldots,i_k\}\sim\mu$ and then choose each copy index uniformly.
\end{definition}

\begin{Algorithm}[H]
\caption{Down-up walk on the complement distribution}
\label{alg:down-up-complement}
\KwIn{A set $S_0\subset E$ with $|S_0|=k$, and an isotropic distribution $\mu$ on $k$-subsets such that $\P_\mu{S_0}$ is maximized among all sets of size $k$}
\KwOut{An approximate sample from $\mu$}
\For{$i=0,\dots, T-1$}{
    Choose a uniformly random $t$-subset $T_i\subseteq E$ that contains $S_i$\; 
    Sample $S_{i+1}\subseteq T_i$ with $|S_{i+1}|=k$ and $\P{S_{i+1}}\propto \mu(S_{i+1})$\;
}
\Return{$S_T$}\;
\end{Algorithm}

To compute marginal overestimates, we use effective-resistance approximations and convert them to spanning-tree edge marginals; in weighted graphs, $\P{e\in T}=w_e\Reff{e}$. 

In our implementation, we use isotropic transformation in a different parameter regime.
Instead of $\left\lceil\frac{n}{K}q_i\right\rceil$, we use
$\left\lceil\gamma\frac{n}{K}q_i\right\rceil$ copies for a sufficiently large $\gamma$.
We justify this choice in \cref{remark:why_too_many_copies}.

\noindent
\begin{proposition}[Analogous to Proposition 23]
Let $\mu : \binom{n}{k} \to \mathbb{R}_{\ge 0}$, and let $\mu' : \binom{U}{k} \to \mathbb{R}_{\ge 0}$ be obtained as in \cref{def:isotropic-transformation}, except that each element $i$ is replaced by $\left\lceil\gamma\frac{n}{K}p_i\right\rceil$ copies for a parameter $\gamma>1$.
Then:
\begin{enumerate}
    \item \textbf{Near-isotropy.} For all $i^{(j)}\in U$,
    \[
    \mathbb{P}_{S\sim\mu'}[i^{(j)}\in S] \le \frac{K}{\gamma n} \le \frac{2K}{|U|}.
    \]
    \item \textbf{Linear ground-set scaling.} $|U|\le 2\gamma n$.
    \item If $\mu$ is strongly Rayleigh, then so is $\mu'$.
\end{enumerate}
\end{proposition}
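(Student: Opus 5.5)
We prove the three items directly from \cref{def:isotropic-transformation}, writing $t_i := \lceil \gamma \tfrac{n}{K} p_i\rceil$ for the number of copies of element $i$.

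\textbf{Near-isotropy.} We compute the marginal of a single copy. By the equivalent description of $\mu'$, we sample $S\sim\mu$ and then pick a uniform copy index for each element of $S$. Hence
\[
\mathbb{P}_{S'\sim\mu'}[i^{(j)}\in S'] = \frac{\mathbb{P}_{S\sim\mu}[i\in S]}{t_i} \le \frac{p_i}{t_i} \le \frac{p_i}{\gamma n p_i/K} = \frac{K}{\gamma n}.
\]
Here we used $p_i\ge \mathbb{P}[i\in S]$ and $t_i\ge \gamma n p_i/K$. If $p_i=0$, the marginal is $0$; in that case $t_i$ may be $0$ and there are no copies, so there is nothing to check.

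The second inequality $\frac{K}{\gamma n}\le \frac{2K}{|U|}$ is equivalent to $|U|\le 2\gamma n$, which is exactly item 2.

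\textbf{Linear ground-set scaling.} Using $\lceil x\rceil \le x+1$,
\[
|U| = \sum_{i=1}^n t_i \le \sum_{i=1}^n\left(\gamma \frac{n}{K}p_i + 1\right) \le \gamma n + n \le 2\gamma n,
\]
where the middle step uses $\sum_i p_i\le K$ and the last step uses $\gamma> 1$. This is the only place the hypothesis $\gamma>1$ enters.

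\textbf{Strong Rayleigh preservation.} We express the generating polynomial of $\mu'$ through that of $\mu$. For each $S$, expanding over copy choices gives
\[
g_{\mu'}(z) = \sum_{S} \mu(S)\prod_{i\in S}\frac{1}{t_i}\sum_{j=1}^{t_i} z_{i^{(j)}} = g_\mu(y_1,\dots,y_n),
\qquad y_i = \frac{1}{t_i}\sum_{j=1}^{t_i} z_{i^{(j)}}.
\]
Each $y_i$ is a positive combination of variables, so if all $z_{i^{(j)}}$ lie in the open upper half-plane, then each $y_i$ does too. Therefore $g_\mu(y)\neq 0$ by real stability of $g_\mu$, and $g_{\mu'}$ is real stable. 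Elements with $t_i=0$ have $p_i=0$, so they lie in no set of the support of $\mu$, and $g_\mu$ does not depend on those variables. Thus no issue arises there.

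\textbf{Expected obstacle.} The only subtle points are the bookkeeping for $p_i=0$ and checking that the normalization $1/(t_{i_1}\cdots t_{i_k})$ matches the substitution. The latter follows immediately from the multilinear expansion above.
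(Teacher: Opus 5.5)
Your proof is correct and follows the paper's argument step for step. You bound the copy marginal by $p_i/t_i$, sum $\lceil x\rceil\le x+1$ using $\sum_i p_i\le K$, and substitute averaged copy variables into $g_\mu$, with convex combinations preserving the upper half-plane. Your explicit treatment of the $p_i=0$ case, where $t_i=0$ and there are no copies, is a small bookkeeping refinement that the paper's sketch leaves implicit.
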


\begin{proof}
The proof follows the same argument as in Proposition 23 of \cite{anari2024optimal}; we include a brief sketch.
By construction,
\[
\mathbb{P}_{S\sim\mu'}[i^{(j)}\in S] \le \frac{p_i}{t_i} \le \frac{K}{\gamma n}.
\]
Also,
\[
|U| = \sum_{i\in[n]} t_i
\le \sum_{i\in[n]}\left(1+\gamma\frac{n}{K}p_i\right)
\le n+\gamma n\cdot\frac{\sum_i p_i}{K}
\le n+\gamma n
\le 2\gamma n.
\]
For strong Rayleigh preservation, if $g_\mu$ is the generating polynomial of $\mu$, then the generating polynomial of $\mu'$ is
\[
g_{\mu'}(z_1^{(1)},\dots,z_n^{(t_n)})
=
g_\mu\!\left(
\frac{z_1^{(1)}+\cdots+z_1^{(t_1)}}{t_1},\dots,
\frac{z_n^{(1)}+\cdots+z_n^{(t_n)}}{t_n}
\right).
\]
Real stability is preserved because convex combinations of points in the upper half-plane remain in the upper half-plane.
\end{proof}

\subsection{Effective Resistance Computation}
We now recall results connecting effective resistance and edge marginals in random spanning trees.

\begin{definition}
Given a graph $G=(V,E,w)$ with $w\in\mathbb{R}_{\ge 0}^{|E|}$, define $\mu_w$ as the weighted spanning tree distribution
\[
\mu_w(T)\propto \prod_{e\in T} w_e.
\]
When $w$ is clear from context, we simply say ``sampling a random spanning tree.''
If $w_e=1$ for all $e\in E$, this is the uniform spanning tree distribution.
\end{definition}

\begin{definition}
Given $G=(V,E,w)$ with nonnegative edge weights, define its Laplacian
\[
L_G = \sum_{(u,v)=e\in E} w_e(\1_u-\1_v)(\1_u-\1_v)^\intercal.
\]
When $G$ is clear, we write $L$ for $L_G$.
\end{definition}

\begin{fact}
For a connected graph $G=(V,E,w)$ and an edge $e=(u,v)$,
\[
\P{e\in T} = w_e(\1_u-\1_v)^\intercal L_G^+(\1_u-\1_v) = w_e\Reff{e},
\]
where $L_G^+$ is the pseudoinverse of $L_G$ and
\[
\Reff{e} = (\1_u-\1_v)^\top L_G^+(\1_u-\1_v)
\]
is the effective resistance between $u$ and $v$.
\end{fact}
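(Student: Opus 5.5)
The plan is to prove the identity by differentiating the partition function of $\mu_w$, using Kirchhoff's matrix-tree theorem together with Jacobi's formula for the derivative of a determinant. Throughout, write $b_e := \1_u - \1_v$ for $e=(u,v)$ and $Z(w) := \sum_{T} \prod_{f\in T} w_f$, the sum being over spanning trees of $G$. Fix a vertex $r$ (the ``ground''). Let $L_r$ be $L_G$ with the row and column of $r$ deleted, and $b_{e,r}$ be $b_e$ with the $r$-coordinate deleted. Connectivity of $G$ (all weights on present edges positive, which we may assume by deleting zero-weight edges) will give $Z(w)>0$.

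\textbf{Step 1 (matrix-tree theorem).} By Kirchhoff's theorem, $Z(w)=\det L_r$. Since $Z>0$, $L_r$ is invertible. Moreover $Z$ is multilinear in $w$, so the terms containing $e$ are exactly $w_e\,\partial_{w_e}Z$. Hence
\[
\P{e\in T} = \frac{w_e\,\partial_{w_e} Z(w)}{Z(w)} = w_e\,\partial_{w_e}\log\det L_r .
\]

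\textbf{Step 2 (Jacobi's formula).} Because $L_G$ depends on $w_e$ only through the term $w_e b_e b_e^\intercal$, we have $\partial_{w_e} L_r = b_{e,r}b_{e,r}^\intercal$. Jacobi's formula then gives
\[
\partial_{w_e}\log\det L_r = \operatorname{tr}\!\left(L_r^{-1} b_{e,r}b_{e,r}^\intercal\right) = b_{e,r}^\intercal L_r^{-1} b_{e,r}.
\]

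\textbf{Step 3 (grounded inverse versus pseudoinverse).} This is the only step needing care: we must show $b_{e,r}^\intercal L_r^{-1} b_{e,r} = b_e^\intercal L_G^+ b_e$.

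Let $y = L_r^{-1} b_{e,r}$, and let $x$ be $y$ extended by $x_r = 0$. Then $(L_G x)_i = (b_e)_i$ for every $i\neq r$. Each column of $L_G$ sums to zero, so $\1^\intercal L_G x = 0$; since also $\1^\intercal b_e = 0$, the $r$-th coordinate must match as well. Thus $L_G x = b_e$.

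Because $G$ is connected, $\ker L_G = \operatorname{span}(\1)$. As $b_e \perp \1$, the vector $L_G^+ b_e$ is the minimum-norm solution of $L_G z = b_e$, and every solution differs from it by a multiple of $\1$. So $x = L_G^+ b_e + c\1$ for some scalar $c$, and
\[
b_e^\intercal L_G^+ b_e = b_e^\intercal x = b_{e,r}^\intercal y = b_{e,r}^\intercal L_r^{-1} b_{e,r},
\]
where the first equality again uses $b_e\perp\1$, and the second uses $x_r=0$.

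\textbf{Conclusion.} Combining Steps 1–3 gives $\P{e\in T} = w_e\, b_e^\intercal L_G^+ b_e = w_e \Reff{e}$.

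I expect Step 3 to be the main obstacle, specifically checking that the dropped $r$-th equation holds automatically. The column-sum argument above handles it. An alternative route, if preferred, is the electrical (transfer-current) interpretation via Wilson's algorithm, but the determinant route is shorter.
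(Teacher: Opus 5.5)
Your proof is correct. The paper does not prove this statement: it records it as a classical fact (Kirchhoff) and uses it only to turn $(1\pm\epsilon)$ effective-resistance estimates from the sparsifier into marginal overestimates. Your argument is the standard self-contained derivation, and each step checks out:
\begin{itemize}
\item by multilinearity, the marginal equals $w_e\,\partial_{w_e}\log Z(w)$;
\item the matrix-tree theorem and Jacobi's formula evaluate this derivative as $b_{e,r}^\top L_r^{-1} b_{e,r}$;
\item the column-sum argument recovers the dropped $r$-th equation, so the grounded quadratic form equals $b_e^\top L_G^+ b_e$.
\end{itemize}
Compared with simply citing the fact, your route buys self-containment. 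The same determinant calculus, with the matrix determinant lemma in place of Jacobi's formula, also gives the joint inclusion probabilities. That is, it yields the transfer-current/determinantal structure behind the strong Rayleigh property the paper relies on.

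One point of hygiene. You cannot ``assume by deleting zero-weight edges'' that the graph stays connected, since deleting them may disconnect it. This matters here because the paper allows $w\in\mathbb{Z}_{\ge 0}^{|E|}$. The hypothesis you actually need is that the positive-weight edges form a connected spanning subgraph, equivalently $Z(w)>0$. This is also what is implicitly required for $\mu_w$ to be defined, and it is what gives $\ker L_G=\operatorname{span}(\mathbf{1})$ in Step~3. Under that hypothesis no deletion is needed: Steps~1--3 use only $Z(w)>0$, and for an edge with $w_e=0$ both sides of the identity are $0$.
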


Isotropic transformation for our sampler requires upper bounds on marginals.
Thus we need $(1\pm\epsilon)$ approximations to quantities like $(\1_u-\1_v)^\top L^+(\1_u-\1_v)$ and then scale them by $1/(1-\epsilon)$ to obtain valid overestimates.

\begin{definition}
A matrix $\widetilde{L}$ is a spectral sparsifier of Laplacian $L$ if for every $x\in\mathbb{R}^n$,
\[
(1-\epsilon)x^\top Lx \le x^\top \widetilde{L}x \le (1+\epsilon)x^\top Lx.
\]
\end{definition}

To obtain approximations to $(\1_u-\1_v)^\top L^+(\1_u-\1_v)$, we use the BCC spectral sparsifier construction of \cite{forster2022laplacian}.
It outputs a weighted Laplacian $\widetilde{L}$ supported on a subset $H\subseteq E$ and an orientation of $H$ in which every vertex has $O(\log^4(n)/\epsilon^2)$ outgoing edges.
Given this orientation, all machines can gather the sparsifier in $\Otilde{1}$ BCC rounds for constant $\epsilon$, and each node can then compute $(1\pm\epsilon)$ approximations for its incident edges.

\begin{theorem}[Spectral Sparsifier in Broadcast Congested Clique \cite{forster2022laplacian}] \label{theorem:spectral_sparsifier_construction}
There exists an algorithm that, given a graph $G=(V,E,w)$ with non-negative integer weights satisfying $\|w\|_\infty \le U$ and error parameter $\varepsilon>0$, outputs with high probability a $(1\pm\varepsilon)$-spectral sparsifier $H$ of $G$ with
\[
|H| = O\!\bigl(n\varepsilon^{-2}\log^4 n\bigr).
\]
Moreover, one can obtain an orientation of $H$ such that, with high probability, each node has out-degree
\[
O\!\bigl(\log^4(n)/\varepsilon^2\bigr).
\]
The algorithm runs in
\[
O\!\bigl(\log^5(n)\varepsilon^{-2}\log(nU/\varepsilon)\bigr)
\]
rounds in the BCC model.
\end{theorem}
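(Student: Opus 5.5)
The plan is to follow the spanner-based sparsification scheme of Koutis and Xu, as implemented by \textcite{forster2022laplacian}, and to show that every step can be simulated in BCC with polylogarithmic rounds. The orientation will come for free from the way spanner edges are added. The scheme is iterative. In each of $O(\log n)$ outer iterations, we compute a bundle of $t=O(\log^2(n)/\varepsilon^2)$ edge-disjoint spanners $B=S_1\cup\dots\cup S_t$ of the current graph $G_i$. Here $S_j$ is a $O(\log n)$-stretch spanner of $G_i\setminus(S_1\cup\dots\cup S_{j-1})$. We keep $B$, keep each edge of $G_i\setminus B$ independently with probability $1/4$, and multiply the weight of kept non-bundle edges by $4$.

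\textbf{Spectral guarantee.} For an edge $e\notin B$, the $t$ disjoint paths of stretch $O(\log n)$ in the bundle give $w_e\Reff{e}\le O(\log n)/t$. A matrix Chernoff bound then shows that each iteration is a $(1\pm\varepsilon/\log n)$-approximation with high probability. Composing $O(\log n)$ iterations gives a $(1\pm\varepsilon)$-sparsifier. The number of non-bundle edges drops geometrically, so after $O(\log n)$ iterations what remains is the union of the bundles. Each bundle consists of $O(\log^2 n/\varepsilon^2)$ spanners of size $O(n\log n)$, which gives $|H|=O(n\varepsilon^{-2}\log^4 n)$.

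\textbf{Weights.} We round weights to powers of $(1+\varepsilon)$ and split the edges into $O(\log(nU/\varepsilon))$ weight classes. Spanners are computed per class, in the standard cluster-based way, so that the stretch holds for weighted distances. This split is the source of the $\log(nU/\varepsilon)$ factor in the round count.

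\textbf{Spanners in BCC and the orientation.} To build each spanner in BCC, we run Baswana--Sen with $k=\log n$ phases. In each phase, every clustered vertex broadcasts its cluster ID and whether its cluster was sampled, using $O(1)$ words. Each vertex $v$ then decides locally, from its incident edges alone, which edges to add to the spanner: one edge to a sampled neighboring cluster, or one edge to each adjacent cluster if none is sampled. We orient each added edge \emph{out of} the vertex that added it. Since $n^{1/k}=O(1)$, each vertex adds $O(\log n)$ edges per phase in expectation, and $O(\log n)$ with high probability after a standard truncation. Over $k$ phases this is $O(\log^2 n)$ per vertex per spanner.

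Since the bundle sets are disjoint, the out-degrees sum over the $t$ spanners. Naively this gives $O(\log^2 n)\cdot O(\log^2 n/\varepsilon^2)$ per iteration, which is a bit more than the target. To reach the stated $O(\log^4(n)/\varepsilon^2)$, I would use the sharper Baswana--Sen bound, namely $O(k n^{1+1/k})$ total edges with $O(k)$ out-degree per vertex, together with the geometric shrinking across iterations.

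\textbf{Round count.} Each spanner costs $O(\log n)$ rounds, each bundle $t$ spanners, and there are $O(\log n)$ outer iterations and $O(\log(nU/\varepsilon))$ weight classes. This gives $O(\log^5(n)\varepsilon^{-2}\log(nU/\varepsilon))$ rounds, up to the per-phase broadcast constants. Every machine knows the random choices for its own incident edges, so subsampling costs no communication.

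\textbf{Main obstacle.} The delicate part is the out-degree accounting. Cluster-join edges have to be charged to the joining vertex, and in the final phase a vertex may be adjacent to many clusters, so the analysis must show that each vertex still adds only $O(\log n)$ edges per phase with high probability, uniformly over all spanners. A second point is that this charging must stay consistent across weight classes, so that the $\log(nU/\varepsilon)$ factor appears only in the round count and not in the out-degree. If this accounting gets too intricate, I would simply invoke \cite{forster2022laplacian} for this step, since the statement is exactly their theorem.
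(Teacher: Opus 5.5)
The paper does not prove this theorem; it quotes it from \cite{forster2022laplacian}. Your fallback of citing that work is therefore exactly the paper's treatment. As a self-contained proof, though, your reconstruction has a genuine gap in the error accounting.

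With stretch $O(\log n)$ and $t=O(\log^2 n/\varepsilon^2)$ spanners per bundle, a non-bundle edge has $w_e\Reff{e}\le O(\log n)/t=O(\varepsilon^2/\log n)$. Matrix Chernoff with this leverage bound gives only a $(1\pm O(\varepsilon))$ guarantee per iteration with high probability, not $(1\pm\varepsilon/\log n)$. Composing $O(\log n)$ iterations then gives $(1\pm O(\varepsilon\log n))$. Forcing error $\varepsilon/\log n$ per iteration needs leverage $O(\varepsilon^2/\log^3 n)$, i.e.\ $t=\Theta(\log^4 n/\varepsilon^2)$. With your union-of-bundles count, this pushes $|H|$ to $O(n\varepsilon^{-2}\log^6 n)$, the out-degree to $O(\varepsilon^{-2}\log^6 n)$, and the rounds beyond the stated bound.

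One consistent repair is as follows. Analyze all subsampling steps as a single matrix martingale normalized by $L_G$, stopped if the current Laplacian leaves $[\frac{1}{2}L_G,\frac{3}{2}L_G]$. Each iteration contributes predictable variance $O(\log n/t)$, so matrix Freedman gives total error $\varepsilon$ once $t=\Theta(\log^3 n/\varepsilon^2)$. Then do not bound $H$ by the union of all bundles. Old bundle edges are resampled unless they reappear, so $\mathbb{E}\,|G_{i+1}|\le |B_i|+|G_i\setminus B_i|/4$, and hence $|H|=O(\max_i|B_i|)=O(tn\log n)=O(n\varepsilon^{-2}\log^4 n)$. Each Baswana--Sen spanner gives $O(\log n)$ out-edges per vertex w.h.p.\ (sum of the per-phase geometric counts). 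Charging each surviving edge to the last bundle that contained it then gives out-degree $O(t\log n)=O(\varepsilon^{-2}\log^4 n)$. The round count is $O(\log n)\cdot t\cdot O(\log n)=O(\varepsilon^{-2}\log^5 n)$, matching the exponents in the statement.

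Two further steps also fail as written.

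First, computing spanners separately per weight class makes each spanner a union over $O(\log(nU/\varepsilon))$ classes. Then $|H|$ and the out-degree would carry a $\log(nU/\varepsilon)$ factor that the statement does not have. Bucketing is unnecessary: Baswana--Sen runs directly on weighted graphs (lengths $1/w_e$) using only locally known incident weights.

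Second, subsampling is not communication-free in BCC. Both endpoints of an edge must agree on whether it survives, since each endpoint's spanner computation in the next iteration depends on it. Announcing private coin outcomes for $\Theta(n^2)$ edges takes $\Omega(n/\log n)$ rounds. You need public randomness, e.g.\ a broadcast seed for a limited-independence hash family, with the concentration argument checked under that limited independence. Each vertex must also broadcast the $O(\log n)$ edges it added to a spanner, so that the next spanner of the bundle avoids them.
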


\begin{remark}[Weight representation and bit complexity]
\Textcite{forster2022laplacian} state their result for positive edge weights. However, in distributed models one must account for the bit complexity involved in represented the weights in messages. A bound $\|w\|_\infty\le U$ alone does not imply $O(\log U)$-bit encodings when weights are arbitrary reals, because scaling can introduce large denominators and precision requirements. In this paper, we therefore assume integer weights. Under this assumption, each weight is represented with $O(\log U)$ bits, and the dependence on $\log(nU)$ in communication complexity is concrete. Moreover, multiplying all edge weights by a common factor does not change the weighted spanning-tree distribution, so this integer-normalization convention does not affect the target distribution.
\end{remark}

    \section{Distributed Sampling Algorithm}

In this section, we describe a Broadcast Congested Clique algorithm for sampling a random spanning tree in $\poly\log(n)$ rounds by simulating \cref{alg:down-up-complement}.
To simulate \cref{alg:down-up-complement}, we need the following components in the Broadcast Congested Clique model:

\begin{enumerate}
    \item\label{item:dist-overestimates} Compute overestimates $q_e$ for edge marginals (preprocessing).
    \item\label{item:dist-light-association} Compute a light association (\cref{def:light-association}) and apply isotropic transformation (\cref{def:isotropic-transformation}) (preprocessing).
    \item Sample a set $Z$ of size $\Otilde{n}$ (each vertex locally samples $\Otilde{1}$ edges).
    \item Gather the sampled edges at a designated machine $M_1$.
    \item Sample $S_{i+1}$ from $Z\cup S_i$ using the induced target distribution.
\end{enumerate}

Item~\ref{item:dist-overestimates} is preprocessing: we use \cite{forster2022laplacian} to obtain overestimates $q_e$ of edge marginals.
In item~\ref{item:dist-light-association}, we compute a light association via \cref{alg:bounded_degree_edge_association} and then apply isotropic transformation to the associated edges.

\begin{remark}
\textcite{forster2022laplacian} provides a light orientation for edges of the spectral sparsifier; by the equivalence in \cref{def:light-association}, this is exactly a light association.
Here we need such an orientation/association for all edges of the graph.
\end{remark}

The light association allows us to gather all sampled edges at one machine in only $\Otilde{1}$ Broadcast Congested Clique rounds.
Informally, combining light association with the size bound on $Z$ yields $\Otilde{1}$ bounded density for sampled edges, which enables the gathering procedure of \cref{prop:bounded_degree_edge_proof}.

After gathering all sampled edges, we execute the final sampling step locally on $M_1$.
As usual in Congested Clique and Broadcast Congested Clique, local computation at each machine is unbounded, and this local tree-sampling step is polynomial-time anyway.

\begin{lemma} \label{lemma:computing_overestimates}
There is an $\Otilde{1}$-round algorithm for computing overestimates $q_e$ of edge inclusion probabilities in weighted random spanning trees (for $\|w\|_\infty \le \poly(n)$), such that each machine knows $q_e$ for every incident edge $e$.
\end{lemma}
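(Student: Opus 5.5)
The plan is to invoke \cref{theorem:spectral_sparsifier_construction} with a constant error parameter, say $\varepsilon = 1/2$. Because $U \le \poly(n)$, this takes $O(\log^5(n)\log(nU)) = \Otilde{1}$ BCC rounds. With high probability it returns a $(1\pm\varepsilon)$-spectral sparsifier $H$ with $O(n\log^4 n)$ edges, together with an orientation in which every vertex has out-degree $O(\log^4 n)$.

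Next I would gather $H$ at every machine. Each vertex $v$ broadcasts its outgoing edges of $H$ one per round, each as the triple (endpoints, sparsifier weight). This takes $O(\log^4 n)$ rounds, since all vertices broadcast in parallel and each has only that many out-edges. The weights must fit in $O(\log n)$-bit messages. For this I would use the weight representation in \cite{forster2022laplacian}: each sparsifier weight is the original integer weight times an inverse sampling probability, drawn from a set of powers of two of size $\poly\log$. Hence it is encoded in $O(\log(nU))=O(\log n)$ bits, or in $O(1)$ messages. After this step every machine knows $\widetilde{L}=L_H$ exactly.

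Each machine then computes locally, with unbounded computation, $\widetilde{R}(u,v) = (\1_u-\1_v)^\top \widetilde{L}^+(\1_u-\1_v)$ for each incident edge $e=(u,v)$. Since $(1-\varepsilon)L \preceq \widetilde{L} \preceq (1+\varepsilon)L$ and $L$, $\widetilde{L}$ share the kernel (both graphs are connected; $G$ is connected and $H$ spectrally approximates it), we get $\frac{1}{1+\varepsilon}L^+ \preceq \widetilde{L}^+ \preceq \frac{1}{1-\varepsilon}L^+$. Therefore $\Reff{e} \le (1+\varepsilon)\widetilde{R}(u,v)$. The machine sets $q_e := \min\{1, (1+\varepsilon) w_e \widetilde{R}(u,v)\}$, which is at least $w_e\Reff{e} = \P{e\in T}$. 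The weight $w_e$ is known to both endpoints from the input.

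These overestimates are also not too large, which later steps need. We have $\sum_e q_e \le \frac{1+\varepsilon}{1-\varepsilon}\sum_e w_e\Reff{e} = 3(n-1)$, so $K = O(n)$ as required by \cref{def:isotropic-transformation}. The main obstacle is the correctness of the sparsifier guarantee, which holds only with high probability. I would handle the failure event by noting that it has probability $n^{-c}$ and can be absorbed into the total variation error of \cref{thm:main-result}. Alternatively, one can make the bound $q_e \ge \P{e\in T}$ deterministic by verifying $\sum q_e$ against the $O(n)$ bound and rerunning on failure. A secondary point is the weight-encoding issue above, which is where the assumption $\|w\|_\infty\le\poly(n)$ is used.
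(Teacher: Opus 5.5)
Your proposal is correct and follows the paper's own argument. You run the sparsifier of \cref{theorem:spectral_sparsifier_construction} with constant error, have each vertex broadcast its $O(\log^4 n)$ out-edges so every machine learns $\widetilde L$, and compute rescaled effective resistances locally; the Loewner-order inversion and the bound $\sum_e q_e\le 3(n-1)$ fill in details the paper leaves implicit.

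One caveat: drop the ``verify $\sum_e q_e$ and rerun'' alternative. A failed sparsifier can make $\widetilde R$ too small on some edges, and that only lowers the sum, so the check cannot certify $q_e\ge\P{e\in T}$. Absorbing the $n^{-c}$ failure probability into the total variation error, as in your main route, is the right handling and is what the paper implicitly does.
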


\begin{proof}
This is a direct application of the spectral sparsifier routine from \textcite{forster2022laplacian} (\cref{theorem:spectral_sparsifier_construction}).
After computing the sparsifier with polynomially bounded weights, we obtain a light orientation.
Then each vertex broadcasts its outgoing sparsifier edges, so every machine learns the sparsifier in $\Otilde{1}$ rounds and can compute overestimates for its incident edges.
\end{proof}

Now we show that, given overestimates $q_e$, we can compute a light association whose per-vertex associated weight is at most $O(\log n)$, and that this process can be implemented in BCC in $O(\log n)$ rounds.

\begin{proposition}
\label{prop:bounded_degree_edge_proof}
\cref{alg:bounded_degree_edge_association} terminates in at most $O(\log n)$ rounds and outputs a light association $N_v^o \subseteq N_v$ for every vertex $v$.
Moreover,
\[
\sum_{e\in N_v^o} q_e \le O(\log n\cdot \alpha)
\]
for every $v$, where $\alpha$ is a strict upper bound on the densest weighted subset under weights $q_e$.
\end{proposition}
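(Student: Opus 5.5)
We have not seen \cref{alg:bounded_degree_edge_association}, so the plan assumes it is the standard weighted peeling (Nash-Williams / Barenboim--Elkin style) procedure. Every still-active vertex $v$ knows the weights $q_e$ of its incident edges. It also knows which neighbors are still active, since each vertex broadcasts one bit when it becomes inactive. Vertex $v$ computes its residual weighted degree $d_v=\sum_{e\in N_v,\ e\text{ active}} q_e$. If $d_v\le 4\alpha$, then $v$ takes all its still-active incident edges as $N_v^o$, deactivates, and broadcasts that fact. An edge is removed once it has been associated. An edge whose two endpoints peel in the same round is given to the endpoint with the smaller identifier, which both endpoints can determine locally. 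Each round costs one broadcast of $O(1)$ bits per vertex, so the round complexity equals the number of peeling phases.

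The structural facts are then immediate. Each edge is assigned exactly once, namely when the first of its endpoints peels, so the sets are disjoint and cover $E$ as soon as every vertex has peeled. Each $N_v^o$ consists of edges incident to $v$, so $N_v^o\subseteq N_v$. The load bound is $\sum_{e\in N_v^o}q_e\le d_v\le 4\alpha$ at the moment $v$ peels. This already gives $O(\alpha)\le O(\alpha\log n)$, so the $\log n$ factor in the statement is slack; it is needed only if the algorithm uses a threshold that grows with $\log n$.

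The main step is to show that $O(\log n)$ rounds suffice. Let $S_r$ be the set of active vertices at the start of round $r$. The residual edges are exactly $E(S_r)$, and the definition of $\alpha$ gives
\[
\sum_{v\in S_r} d_v = 2\,q(E(S_r)) < 2\alpha|S_r|.
\]
By Markov's inequality, fewer than $|S_r|/2$ vertices of $S_r$ have $d_v>4\alpha$. Hence at least half of the active vertices peel in each round, so $|S_{r+1}|\le |S_r|/2$ and everything is peeled after $\lceil\log_2 n\rceil+1$ rounds.

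The expected obstacle is that the algorithm must know $\alpha$. If it is given (for example, $\alpha=2$ for the weights $q_e$ derived from effective resistances, up to the $(1+\epsilon)$ overestimation factor), the argument above is complete. Otherwise we would run the procedure with guessed thresholds $2^j$ in parallel, or sequentially. We would stop at the first $j$ for which all vertices peel within $\lceil\log_2 n\rceil+1$ rounds, which can be checked with one extra broadcast round. Since $q_e\le \poly(n)$, only $O(\log n)$ guesses are needed. This is the likely source of an extra $\log n$ factor in the round count if the guessing is done sequentially. The load bound at the first successful guess stays within $O(\alpha)$, because the guess $2^j$ at the first $j$ with $2^j\ge\alpha$ already succeeds.
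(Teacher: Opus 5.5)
Your argument is correct for the procedure you describe, but that procedure is not \cref{alg:bounded_degree_edge_association}, and the proposition is a claim about that specific algorithm. So as a proof of the statement there is a genuine gap.

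In the paper's algorithm no vertex ever retires. In every iteration each vertex $v$ announces its \emph{unweighted} residual degree $d_v$ and claims each residual incident edge $e$ with $q_e<8\alpha/d_v$, with ties broken by ID. Your key step is Markov's inequality on weighted degrees, showing that half of the active vertices drop out. It has no analogue here, because a vertex can keep residual edges and claim some of them in every iteration.

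What must be shown instead is that the number of residual edges halves, and this needs a different argument. Let $R$ be the residual edges and $F\subseteq R$ the survivors, i.e.\ edges $\{u,v\}$ with $q_e\ge 8\alpha/d_u$ and $q_e\ge 8\alpha/d_v$, and suppose $|F|\ge|R|/2$. Repeatedly delete any vertex whose current $F$-degree is below $d_v/4$. This removes fewer than $\frac14\sum_v d_v=|R|/2\le|F|$ edges, so a nonempty $F'$ survives, with vertex set $S$. Every $v\in S$ has at least $d_v/4$ incident edges of $F'$, each of weight at least $8\alpha/d_v$. Hence $v$ has weighted degree at least $2\alpha$ inside $S$. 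Summing gives $q(E(S))\ge q(F')\ge\alpha|S|$, which contradicts $\alpha$ being a strict density bound. This is the argument the paper sketches, and with $|E|\le n^2$ it yields $O(\log n)$ iterations.

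Your remark that the $\log n$ in the load bound is slack is likewise an artifact of analyzing a different procedure. In \cref{alg:bounded_degree_edge_association}, the load added to $v$ in one iteration is bounded only by $d_v\cdot 8\alpha/d_v=8\alpha$. Summing over $O(\log n)$ iterations is exactly where $O(\alpha\log n)$ comes from.

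That said, your variant (retire once the weighted residual degree is at most $4\alpha$) is a valid substitute for the paper's purposes. It uses the same $O(\log n)$ BCC rounds, replaces the dense-subgraph extraction with a one-line Markov bound, and gives the stronger per-vertex load $O(\alpha)$. Guessing $\alpha$ is unnecessary, since the paper fixes $\alpha=2$.
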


\begin{proof}
We first show geometric progress.
In each iteration of \cref{alg:bounded_degree_edge_association}, the number of residual edges decreases by at least a factor of $2$.
Suppose not, so at least half of the edges survive.
Then one can find a vertex subset $S$ such that every vertex in the induced subgraph on $S$ has degree at least a constant fraction of its average residual degree.
This implies weighted degree at least $2\alpha$ at each such vertex, contradicting that $\alpha$ upper-bounds the densest weighted subset.

For the second claim, in one iteration each vertex receives at most
\[
\left(\frac{8\alpha}{d_j}\right)\cdot d_j \le 8\alpha
\]
additional associated weight.
Since the number of residual edges halves each iteration, there are $O(\log n)$ iterations, yielding total associated weight $O(\log n\cdot\alpha)$ per vertex.
\end{proof}

\begin{Algorithm}[H]
\caption{Light Edge Association}
\label{alg:bounded_degree_edge_association}
\KwIn{Overestimates $q_e$ and parameter $\alpha$ (we use $\alpha=2$)}
\KwOut{Sets $N_v^o\subseteq N_v$ such that $\bigcup_{v\in V} N_v^o=E$ and $\sum_{e\in N_v^o}q_e\le\Otilde{1}$ for each $v$}
Initialize $N_v^o\gets\emptyset$ for all $v\in V$\;
\While{$\bigcup_v N_v^o \neq E$}{
    Build the residual graph $R$ with edge set $E\setminus\bigcup_v N_v^o$\;
    Each vertex $v$ computes and announces its residual degree $d_v$ in $R$\;
    Vertex $v$ adds every incident residual edge $e$ with $q_e < \frac{8\alpha}{d_v}$ to $N_v^o$\;
    If both endpoints attempt to add the same edge, break ties by vertex ID\;
}
\Return{$\{N_v^o\}_{v\in V}$}\;
\end{Algorithm}

\begin{remark}
In \cref{alg:bounded_degree_edge_association}, an edge is never associated with both endpoints: both endpoints can evaluate the same condition from announced residual degrees, and ties are resolved deterministically by ID.
\end{remark}

\begin{Algorithm}
\caption{Broadcast Congested Clique Random Tree Sampling}
\label{alg:main algorithm}
\KwIn{Graph $G=(V,E,w)$, parameters $\gamma=\poly(n)$ and $T=\Otilde{1}$}
\KwOut{Sampled random spanning tree $S$}
$\mu\leftarrow$ weighted spanning tree distribution induced by $w$\;
$q_e\leftarrow$ edge-marginal overestimates (\cref{lemma:computing_overestimates})\;
Compute a light association $N_j^o\subseteq N_j$ for each machine $j$ with $\sum_{e\in N_j^o}q_e\le\Otilde{1}$ (\cref{alg:bounded_degree_edge_association})\;
Apply isotropic transformation with $\left\lceil\frac{|E|}{\sum q_e}q_e\cdot\gamma\right\rceil$ copies of each edge $e$ (\cref{def:isotropic-transformation})\;
Let $G'=(V,E')$ be the transformed graph\;
Let $N_j^{\text{iso}}$ denote representative copies of edges in $N_j^o$\;
Set $t\gets 2|V|$ (number of elements dropped in the complement down-up walk)\;
$S_0\gets$ any spanning tree guaranteed by \cref{lemma:initial-spanning-tree}\;
\For{$i=0\to T-1$}{
    For each machine $M_j$, sample $Z_j$ by including each isotropic edge from $N_j^{\text{iso}}$ independently with probability $\min\!\left(1,\frac{2t}{|E'|}\right)$\;
    Collect $\bigcup_{j=1}^n Z_j$ at machine $M_1$\;
    \While{$\left|\bigcup_{j=1}^n Z_j\right|<t$}{
        Resample all sets $Z_j$\;
    }
    $Z\leftarrow$ uniformly random size-$t$ subset of $\bigcup_{j=1}^n Z_j$\;
    Compute $S_{i+1}\subseteq Z\cup S_i$ from distribution $\P{S_{i+1}}\propto \mu(S_{i+1})$\;
}
\Return{$S_T$}\;
\end{Algorithm}

\begin{remark}[Initialization of \cref{alg:down-up-complement}]
\cref{alg:down-up-complement} assumes the starting state $S_0$ maximizes $\P_\mu{S_0}$ among all spanning trees.
In \cref{alg:main algorithm}, we initialize with an arbitrary spanning tree from \cref{lemma:initial-spanning-tree}.
This is still sufficient: in our weighted setting ($\|w\|_\infty\le U$), the additional initial divergence is at most $O(\log(nU))$, so the extra burn-in is absorbed by the existing polylogarithmic runtime terms.
If one wants to satisfy the assumption exactly, one can instead compute a maximum-probability initial tree by running the PRISM MST routine for $O(\log n)$ rounds on transformed edge costs $-\log w_e$ (equivalently, a maximum spanning tree on edge scores $\log w_e$, for positive $w_e$).
\end{remark}

\begin{theorem}\label{thm:main-sample-alg}
\cref{alg:main algorithm} outputs a sample from a distribution $\nu$ such that
\[
\dTV{\nu,\mu}\le\frac{1}{\poly(n)}
\]
for parameter $T=\Otilde{1}$ and polynomially bounded weights.
Additionally, the algorithm runs in at most $\Otilde{1}$ BCC rounds with high probability.
\end{theorem}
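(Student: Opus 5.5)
The proof has two parts: correctness, i.e.\ the TV bound, and round complexity. For correctness, I would show that each iteration of the main loop of \cref{alg:main algorithm} is \emph{exactly} one step of \cref{alg:down-up-complement} run on the isotropically transformed distribution $\mu^{\mathrm{iso}}$ over the ground set $E'$. The output tree is obtained by projecting copies back to original edges, and this projection pushes $\mu^{\mathrm{iso}}$ forward to $\mu$. By the data-processing inequality for total variation, it therefore suffices to bound $\dTV{\mu^{\mathrm{iso}},\nu^{\mathrm{iso}}}$.

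The key point in the exactness claim is the sampling of $Z$. Each copy is included independently with the same probability $\min(1,2t/|E'|)$, and we condition on at least $t$ copies surviving. Given the surviving set, choosing a uniformly random $t$-subset then yields a uniformly random $t$-subset of $E'$; this follows by exchangeability, since every copy has the same marginal. Taking the union with $S_i$ and restricting $\mu$ to $Z\cup S_i$ matches the up step. The walk in \cref{alg:down-up-complement} demands that $T_i$ be a $t$-subset \emph{containing} $S_i$. I would handle this by a small reparametrization: take $T_i = S_i$ together with a uniformly random subset of $E'\setminus S_i$ of the appropriate size, with the $O(n)$ collisions absorbed into $t\ge k+1$, $t=\Theta(k)$. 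Alternatively, I would note that the analysis of \cite[Theorem 3]{anari2024optimal} tolerates this variant.

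The hypotheses of \cref{theorem:backbone-algorithm} come from earlier results. Strong Rayleigh-ness and near-isotropy follow from the proposition analogous to Proposition 23. Here $K=\sum_e q_e=O(n)$, because the overestimates from \cref{lemma:computing_overestimates} are $(1+O(\epsilon))$ times marginals summing to $n-1$. Also $t=2|V|=\Omega(k)$. For the arbitrary initialization, I would invoke the initialization remark: the initial divergence from a tree to $\mu$ is $O(\log(nU))$, which is $O(\log n)$ for polynomial weights. Then $T=\Otilde{1}\cdot\log(\mathrm{poly}(n))$ steps give TV distance $1/\poly(n)$. The high-probability failure events (a bad sparsifier, or too many samples) add another $1/\poly(n)$ to the TV bound.

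For rounds, preprocessing costs $\Otilde{1}$ by \cref{lemma:computing_overestimates}, \cref{prop:bounded_degree_edge_proof}, and \cref{lemma:initial-spanning-tree}; here $\alpha=O(1)$, because weighted density under marginals is at most $1$, inflated by the overestimate factor. In each iteration, machine $j$ samples from $N_j^{\mathrm{iso}}$. The number of copies of edges associated to $j$ is at most $\sum_{e\in N_j^o}(1+\gamma|E|q_e/K)$, and each copy is kept with probability $2t/|E'|$ with $|E'|\ge \gamma |E|$. The copies from the $q_e$ term therefore contribute expected load $O(\frac{t}{K}\sum_{e\in N_j^o} q_e)=O(\log n)$. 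The ``$+1$'' terms contribute at most $|N_j^o|\cdot 2t/|E'|\le 2t/\gamma$, which is negligible because $\gamma=\poly(n)$; this is exactly the role of large $\gamma$. A Chernoff bound and a union bound over machines and iterations give $O(\log n)$ sampled edges per machine w.h.p., so broadcasting them takes $O(\log n)$ rounds each. The expected union size is $2t$, so by Chernoff the resampling loop runs $O(1)$ times w.h.p. Machine $M_1$ then samples locally and broadcasts the $n-1$ edges of $S_{i+1}$. Each vertex can broadcast the tree edges it is associated with; since a tree has density at most $1$, reusing \cref{alg:bounded_degree_edge_association} with unit weights gives $O(\log n)$ per vertex. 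The total is $\Otilde{1}$.

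I expect the main obstacle to be the exact match between the sampling of $Z$ in the algorithm and the $t$-superset step of \cref{alg:down-up-complement}, including the interaction with $S_i$ and the conditioning on $|\bigcup Z_j|\ge t$. If an exact match fails, my fallback is to bound the per-step TV discrepancy by $1/\poly(n)$ and sum it over the $\Otilde{1}$ steps.
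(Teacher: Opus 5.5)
Your proposal is correct and takes the same route as the paper. Each loop iteration simulates a complement down-up step on the isotropic transform, the mixing theorem of \cite{anari2024optimal} then gives the TV bound, and rounds are bounded through the light association plus a Chernoff bound on per-machine loads. Your load computation, including the $2t/\gamma$ contribution of the ceilings, is exactly the content of the paper's edge-transmission lemma and its remark on large $\gamma$.

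You are also more careful than the paper on two points it asserts without argument. The first is the exchangeability argument that makes $Z$ uniform on $\binom{E'}{t}$. The second is that $Z\cup S_i$ is not a uniform $t$-superset of $S_i$. It is $S_i$ together with a uniform subset of $E'\setminus S_i$ of size $t-|Z\cap S_i|$. Since $|Z\cap S_i|$ is hypergeometric with a law that does not depend on the current state, the transition kernel is a fixed mixture of complement down-up walks. Each component removes between $t-k$ and $t$ elements from the complement state, which is $\Theta(k)$, and KL contraction passes to the mixture by convexity. So your reparametrization works.

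One step in your round count should be deleted: $M_1$ broadcasting $S_{i+1}$ after each iteration. It is unnecessary, because the sampling of the $Z_j$ never uses $S_i$ and only $M_1$ needs the state. It also cannot be done as you describe. The endpoints do not know which of their associated edges $M_1$ selected, since $Z$ and the up-step use $M_1$'s private randomness. And $M_1$ alone can broadcast only $O(\log n)$ bits per round, so the density-one load-balancing trick does not apply.

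Finally, read the initial-divergence claim, which you inherited from the paper's remark, as a bound on the logarithm of the divergence. The KL divergence of the point mass at $S_0$ from $\mu^{\mathrm{iso}}$ is $\log(1/\mu^{\mathrm{iso}}(S_0))$. This can be $\Theta(n\log(nU))$, since both $\mu(S_0)$ and the copy factor $\prod_{e} t_e$ contribute. It is harmless only because, under geometric contraction, the step count depends on its logarithm.
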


\begin{proof}
We first show that \cref{alg:main algorithm} faithfully simulates the walk from \cref{theorem:backbone-algorithm}.
By \cref{lemma:computing_overestimates}, we obtain overestimates in $\Otilde{1}$ rounds.
Then we compute edge associations $N_j^o\subseteq N_j$ in $\Otilde{1}$ rounds via \cref{alg:bounded_degree_edge_association} and \cref{prop:bounded_degree_edge_proof}.

After isotropic transformation, the number of edges is $\Theta(n^4)$.
In BCC, this does not create a bottleneck for representation: when the sampled set contains many parallel copies of one original edge $\{u,v\}$, we do not transmit each copy separately. Instead, we broadcast the endpoint pair $(u,v)$ together with its sampled multiplicity (count). Since endpoint IDs and these counts are polynomially bounded in $n$, this uses $O(\log n)$ bits per encoded pair.

After gathering sampled edges at $M_1$, one iteration correctly implements a down-up step of \cref{alg:down-up-complement}.
Each edge is sampled with probability $\min\!\left(1,\frac{2t}{|E'|}\right)$; with our parameters, $\frac{2t}{|E'|}<1$ for nontrivial $|V|$.
Let $R:=\bigcup_v Z_v$. Conditioned on $|R|\ge t$, we choose $Z$ uniformly from $\binom{R}{t}$; equivalently, for every $A\subseteq R$ with $|A|=t$, $\P{Z=A\mid R}=1/\binom{|R|}{t}$. This is exactly the uniform down-step distribution in \cref{alg:down-up-complement}.

We now bound the number of rounds.
\begin{enumerate}
    \item\label{item:procedures-overestimates} Computing overestimates takes $\Otilde{1}$ rounds by \cref{lemma:computing_overestimates}.
    \item Computing light association takes $\Otilde{1}$ rounds by \cref{prop:bounded_degree_edge_proof}.
    \item Computing $S_0$ takes $\Otilde{1}$ rounds by \cref{lemma:initial-spanning-tree}.
    \item Sampling $Z_v$ and gathering at $M_1$ takes $\Otilde{1}$ rounds with high probability: $N_v^{\text{iso}}$ has total weight $\Otilde{1}$ and concentration for Bernoulli sums applies (\cref{lemma:proof_of_edge_transmission_bounded}).
    \item\label{item:procedures-event} The event $\left|\bigcup_v Z_v\right|\ge t$ holds with probability $1-1/\poly(n)$ by concentration around the mean.
\end{enumerate}

By \cref{lemma:proof_of_edge_transmission_bounded}, each vertex samples at most $\Otilde{1}$ edges with high probability, so all sampled edges can be delivered to $M_1$ in $\Otilde{1}$ BCC rounds.
Therefore, the entire algorithm runs in $\Otilde{1}$ rounds with high probability.
\end{proof}

\begin{lemma}\label{lemma:proof_of_edge_transmission_bounded}
For a given vertex $v$, the number of edges sampled into $Z_v$ is at most $\Otilde{1}$ with high probability.
\end{lemma}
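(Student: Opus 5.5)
The plan is to write $|Z_v|$ as a sum of independent Bernoulli variables, bound its expectation by $\Otilde{1}$ using the light association, and then apply a Chernoff bound together with a union bound over vertices and rounds.

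\textbf{Step 1: express the mean.} Each isotropic copy in $N_v^{\text{iso}}$ enters $Z_v$ independently with probability $p := \min(1, 2t/|E'|)$. Since the copies of an edge $e \in N_v^o$ number $t_e = \lceil \gamma \frac{|E|}{K} q_e\rceil$, with $K := \sum_e q_e$, we have
\[
\E{|Z_v|} = p \sum_{e\in N_v^o} t_e \le \frac{2t}{|E'|}\sum_{e\in N_v^o}\left(1 + \gamma\frac{|E|}{K}q_e\right).
\]

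\textbf{Step 2: bound the mean by $\Otilde{1}$.} Two inputs are needed.

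First, a lower bound on $|E'|$. Each edge contributes $t_e \ge \gamma\frac{|E|}{K}q_e$ copies, so $|E'| \ge \gamma |E|$. The same bound also follows from $|E'| \ge |E|$ together with the $\gamma$-scaled total.

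Second, we need $K = \sum_e q_e = \Theta(n)$. This holds because $\sum_e w_e\Reff{e} = n-1$, and the $q_e$ are constant-factor overestimates coming from \cref{lemma:computing_overestimates}.

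With $t = 2n$, these give
\[
\frac{2t}{|E'|}\cdot \gamma\frac{|E|}{K}\sum_{e\in N_v^o} q_e \le \frac{4n}{K}\sum_{e\in N_v^o}q_e = O(1)\cdot \sum_{e\in N_v^o}q_e = \Otilde{1},
\]
where the last step uses \cref{prop:bounded_degree_edge_proof}. Here we take $\alpha = 2$, which is valid because marginals form a point of the spanning tree polytope, so the weighted density is at most $1$, and the constant-factor overestimates only change $\alpha$ by a constant.

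The remaining term is the additive $+1$ from the ceiling, which contributes $\frac{2t}{|E'|}|N_v^o| \le \frac{4n\cdot n}{\gamma |E|}$. This is $o(1)$ once $\gamma = \poly(n)$ is chosen large, say $\gamma \ge n^2$; this is where the paper's choice of many copies is used.

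\textbf{Step 3: concentrate.} $|Z_v|$ is a sum of independent indicators with mean $\mu_v \le L$, where $L = \Otilde{1}$. The Chernoff bound $\P{X \ge (1+\delta)L} \le \exp(-\delta L/3)$ for $\delta \ge 1$ gives $|Z_v| \le \max(2L,\, C\log n) = \Otilde{1}$ with probability at least $1 - n^{-c}$, after choosing the constant $C$ suitably. A union bound over all $n$ vertices and all $T = \Otilde{1}$ iterations, including the polynomially few resamplings, gives the claim simultaneously for every vertex and every iteration.

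\textbf{Main obstacle.} The delicate point is Step 2. There we must make sure that the normalizations actually cancel: $|E'|$ versus $\gamma|E|$, and $K = \Theta(n)$ (which needs the overestimates to be within constant factors). We must also check that the rounding term from the ceilings is negligible, which requires $\gamma$ to be large relative to $n^2/|E|$. Concentration in Step 3 is routine once the mean is bounded.
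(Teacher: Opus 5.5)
Your proposal is correct and follows essentially the same route as the paper: it bounds $\mathbb{E}[|Z_v|]$ by $O(1)\cdot\sum_{e\in N_v^o}q_e=\Otilde{1}$ using the light association and the copy normalization with $t=2n$, then applies a Chernoff bound. Your bookkeeping is tighter than the paper's, because working with the ratio bounds $|E'|\ge\gamma|E|$ and $K\ge n-1$ avoids the paper's loose concrete counts such as $|E'|=\Theta(n^4)$ (only this lower bound on $K$ is needed, and it holds for any overestimates); also, your threshold $\max(2L,C\log n)$ is the correct form of the tail bound, whereas the paper's purely multiplicative $C\log n\cdot\mathbb{E}[|Z_v|]$ bound fails when the mean is tiny.
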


\begin{proof}
Suppose the total associated weight at vertex $v$ is $\beta\le O(\log n)$.
After isotropic transformation, the number of incident copies associated to $v$ is at most $\beta\cdot n^3 + n^2$.
Since $|E'|=\Theta(n^4)$ and $t=2n$ in \cref{alg:main algorithm}, the expected number of sampled edges from this pool is
\[
\frac{t}{n}\cdot\beta \in \Otilde{1}.
\]
Because $Z_v$ is a sum of independent Bernoulli random variables bounded by $1$, standard concentration bounds imply
\[
|Z_v| \le C\log n \cdot \mathbb{E}[|Z_v|]
\]
with probability at least $1-n^{-c}$ for any fixed $c$ by choosing a suitable constant $C$.
Hence $|Z_v|=\Otilde{1}$ with high probability.
\end{proof}

\begin{remark}
\label{remark:why_too_many_copies}
Here we explain why parameter $\gamma$ in \cref{def:isotropic-transformation} must be as large as $n^3$ in our setting.
Consider a graph on $n+\sqrt n$ vertices formed by a length-$n$ path attached to a clique of size $\sqrt n$.
Each path edge appears in the uniform spanning tree with probability $1$, while a clique edge appears with probability about $1/\sqrt n$.
If we use constant $\gamma$, then after isotropic transformation we still have only $O(n)$ edges.
In this particular parameter setup, \cref{alg:main algorithm} requires sending a constant fraction of incident sampled edges to $M_1$, so clique vertices would each need to send $\Theta(\sqrt n)$ edges on average.
That requires at least $\tilde\Omega(\sqrt n)$ BCC rounds, violating our target runtime.
\end{remark}

Next, we complete the proof of the main theorem.
\begin{proof}[Proof of \cref{thm:main-result}]
By \cref{thm:main-sample-alg}, using \cref{alg:main algorithm} with
\[
T = \log\!\left(\frac{1}{\varepsilon}\right)\log^C(n)
\]
(for an absolute constant $C$ covering all polylogarithmic factors in \crefrange{item:procedures-overestimates}{item:procedures-event}) gives the desired $\varepsilon$-accuracy.

\cref{thm:main-sample-alg} provides error $1/\poly(n)$ from stationarity per polylogarithmic number of steps.
To obtain arbitrary $\varepsilon$, we use the standard geometric-contraction view of the same walk in relative entropy.
Let $\nu_t$ be the chain distribution after $t$ steps.
The analysis underlying \cref{theorem:backbone-algorithm} gives a contraction
\[
D_{\mathrm{KL}}(\nu_{t+1}\|\mu) \le \left(1-\frac{1}{\log^C n}\right) D_{\mathrm{KL}}(\nu_t\|\mu)
\]
for an absolute constant $C$ (up to polylogarithmic factors).
Thus each block of $\Theta(\log^C n)$ steps shrinks the KL divergence by a constant factor.
By Pinsker's inequality, $d_{\mathrm{TV}}(\nu_t,\mu)\le\sqrt{D_{\mathrm{KL}}(\nu_t\|\mu)/2}$, so to reach $d_{\mathrm{TV}}\le\varepsilon$ it is enough to reduce KL to $O(\varepsilon^2)$.
This requires an additional multiplicative factor $\Theta\!\left(\log\!\left(\frac{1}{\varepsilon}\right)\right)$ in the number of blocks, and therefore in the runtime.

If $\|w\|_\infty = U$, then the spectral sparsification routine of \cite{forster2022laplacian} contributes an additional $\log(nU)$ factor in communication complexity (for constant approximation accuracy).
In \cref{alg:main algorithm}, this dependence appears in the preprocessing stage $q_e\leftarrow$ edge-marginal overestimates: first when constructing/broadcasting the sparsifier with orientation, and then when distributing the resulting overestimates so each endpoint can use them.
After these values are available, the remaining steps (light association iterations, per-round edge sampling, gathering to $M_1$, and local up-step sampling) do not introduce further $U$-dependence.
Hence we account for a single additional multiplicative $\log(nU)$ factor in the overall round complexity bound.

Combining both effects, the BCC runtime is
\[
\log(nU)\log\!\left(\frac{1}{\varepsilon}\right)\Otilde{1},
\]
as claimed.
\end{proof}

    \PrintBibliography

\end{document}